\newcommand\fauxsc[1]{\fauxschelper#1 \relax\relax}
\def\fauxschelper#1 #2\relax{%
  \fauxschelphelp#1\relax\relax%
  \if\relax#2\relax\else\ \fauxschelper#2\relax\fi%
}
\def\Hscale{.85}\def\Vscale{.72}\def\Cscale{1.10}
\def\fauxschelphelp#1#2\relax{%
  \ifnum`#1>``\ifnum`#1<`\{\scalebox{\Hscale}[\Vscale]{\uppercase{#1}}\else%
    \scalebox{\Cscale}[1]{#1}\fi\else\scalebox{\Cscale}[1]{#1}\fi%
  \ifx\relax#2\relax\else\fauxschelphelp#2\relax\fi}
\newcolumntype{L}[1]{>{\raggedright\let\newline\\\arraybackslash\hspace{0pt}}m{#1}}
\newcolumntype{C}[1]{>{\centering\let\newline\\\arraybackslash\hspace{0pt}}m{#1}}
\newcolumntype{R}[1]{>{\raggedleft\let\newline\\\arraybackslash\hspace{0pt}}m{#1}}
\newcommand\cyr
\renewcommand\rmdefault{wncyr}
\renewcommand\sfdefault{wncyss}
\renewcommand\encodingdefault{OT2}
\DeclareTextFontCommand{\textcyr}{\cyr}
\newtheorem*{corollary2*}{Corollary~1 in Elbassioni \cite{Elbassioni2015}}
\newtheorem*{thm*}{Theorem~21 in Flum and Grohe \cite{Flum2004}}
\newtheorem{remark}{Remark}[section]
\newcommand{\TheTitle}{A general purpose algorithm for counting simple cycles and simple paths of any length} 
\newcommand{\RunningTitle}{Counting simple cycles and simple paths} 
\newcommand{\TheAuthors}{P.-L. Giscard, N. Kriege, and R. C. Wilson}
\headers{\RunningTitle}{\TheAuthors}
\title{{\TheTitle}\thanks{Submitted to the editors DATE.
\funding{P.-L.~Giscard acknowledges financial support from the Royal Commission for the Exhibition of 1851. N.~Kriege is supported by the German Science Foundation (DFG) within the Collaborative Research Center SFB 876 ``Providing Information by Resource-
Constrained Data Analysis'', project A6 ``Resource-efficient Graph Mining''.}}}
\author{
  Pierre-Louis Giscard\thanks{Department of Computer Science, University of York  (\email{pierre-louis.giscard@york.ac.uk}, \email{richard.wilson@york.ac.uk})}
  \and
  Nils Kriege\thanks{Department of Computer Science, TU Dortmund (\email{nils.kriege@tu-dortmund.de})}
  \and
  Richard C. Wilson\footnotemark[2]
}
\begin{document}

\maketitle

\begin{abstract}
%
We describe a general purpose algorithm for counting simple cycles and simple paths of any length $\ell$ on a (weighted di)graph on $N$ vertices and $M$ edges, achieving a time complexity of 
$O\left(N+M+\big(\ell^\omega+\ell\Delta\big) |S_\ell|\right)$. In this expression, $|S_\ell|$ is the number of (weakly) connected induced subgraphs of $G$ on at most $\ell$ vertices, $\Delta$ is the maximum degree of any vertex and $\omega$ is the exponent of matrix multiplication. 
We compare the algorithm complexity both theoretically and experimentally with most of the existing algorithms for the same task. These comparisons show that the algorithm described here is the best general purpose algorithm for the class of graphs where $(\ell^{\omega-1}\Delta^{-1}+1) |S_\ell|\leq |\text{Cycle}_\ell|$, with $|\text{Cycle}_\ell|$ the total number of simple cycles of length at most $\ell$, including backtracks and self-loops. On Erd\H{o}s-R\'enyi random graphs, we find empirically that this happens when the edge probability is larger than circa $4/N$. In addition, we show that some real-world networks also belong to this class.
Finally, the algorithm permits the enumeration of simple cycles and simple paths on networks where vertices are labeled from an alphabet on $n$ letters with a time complexity of $O\left(N+M+\big(n^\ell\ell^\omega+\ell\Delta\big) |S_\ell|\right)$. 
A Matlab implementation of the algorithm proposed here is available for download.
\end{abstract}

\begin{keywords}
Simple cycles; simple paths; self-avoiding walks; self-avoiding polygons; elementary circuits; connected induced subgraphs; networks; graphs; digraphs; labeled graphs
\end{keywords}

\begin{AMS}
68Q25, 68W40, 05C30,  05C38, 05C22
\end{AMS}

\normalsize

\section{Introduction}
Counting Hamiltonian cycles and, more generally, all simple cycles passing through a given vertex is a $\#$P-complete problem \cite{Valiant1979,Burgisser1997}. The same classification holds for the problem of counting simple paths with fixed endpoints. 
Unsurprisingly, the best existing algorithms for counting such cycles have time complexities $O\big(2^N\text{poly}(N)\big)$, which scales exponentially with the number $N$ of vertices on the graph. Under the exponential time hypothesis \cite{Impagliazzo2001}, this exponential scaling is, in principle, the best possible.\\[-.7em]

Although evaluating the time complexity of an algorithm in the worst case scenario is of paramount importance for the classification of algorithmic performance, it is of little relevance to applications which differ significantly from this scenario. This is precisely the case when counting or enumerating simple cycles or simple paths. Real-world networks, be they from sociology, biology or chemistry, are typically very sparse. At the opposite, the worst case scenarios for this task|the complete graphs|are dense and counting or finding cycles and paths of any kind on them presents no interest, in particular since everything is already known analytically. An algorithm counting simple cycles and paths that is especially tailored for sparse graphs is therefore highly desirable. In particular, we expect the graph sparsity, or some quantity related to it, to be a relevant parameter when qualifying the complexity of such an algorithm.\\[-.7em]

%
In addition to these considerations, we observe that one rarely needs to count \textit{all} simple cycles or paths. Rather it is typically sufficient to count only those whose length does not exceed some maximum value $\ell$, usually much smaller than the graph size $N$.
Yet, even with these restrictions, the problem of counting simple cycles or simple paths is known to be difficult:\\
\begin{thm*}
Counting simple cycles and simple paths of length $\ell$ on both directed and undirected graphs, parameterized by $\ell$, is $\#$W[1]-complete.\\
\end{thm*}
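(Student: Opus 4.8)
The statement is a completeness claim, so the plan is to establish two things — membership in $\#$W[1] and $\#$W[1]-hardness — and to reduce all four cases to one. Since a simple $\ell$-path of an undirected graph is the same as a directed simple $\ell$-path of the graph obtained by replacing each edge with a pair of opposite arcs, undirected hardness implies directed hardness, and dually directed membership implies undirected membership; moreover the path and cycle versions are interreducible by routine surgery (join the two ends of every candidate path through a fresh vertex, or identify the source and sink of the gadget), so it suffices to treat undirected simple paths, which I will call $p$-$\#$Path. Throughout I take as known that the problem $p$-$\#$Clique of counting $k$-cliques (parameter $k$) is $\#$W[1]-complete, and that $\#$W[1] is closed under fpt parsimonious and Turing reductions.

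\emph{Membership.} For $p$-$\#$Path $\in \#$W[1] I would give a parsimonious fpt-reduction to $p$-$\#$Clique. Given $G=(V,E)$ and a length $\ell$, build the layered graph $G'$ on vertex set $V\times\{0,1,\dots,\ell\}$ with no edges inside a layer and, for $a\ne b$, an edge between $(u,a)$ and $(v,b)$ exactly when $u\ne v$ and, in addition, $uv\in E$ whenever $|a-b|=1$. As the layers are independent, every $(\ell+1)$-clique of $G'$ uses one vertex per layer, say $(v_0,0),\dots,(v_\ell,\ell)$, and the clique condition says exactly that the $v_a$ are pairwise distinct with $v_av_{a+1}\in E$ for all $a$ — that is, that $v_0 v_1\cdots v_\ell$ is an ordered simple $\ell$-path of $G$. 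Hence the number of $(\ell+1)$-cliques of $G'$ equals $2$ times, or once if $\ell=0$, the number of simple $\ell$-paths of $G$; dividing by this fpt-computable constant, and using the analogous cyclic layering (factor $2\ell$) for cycles, gives $p$-$\#$Path, $p$-$\#$Cycle $\le^{\mathrm{fpt}} p$-$\#$Clique$\,\in \#$W[1].

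\emph{Hardness.} For $\#$W[1]-hardness I would reduce $p$-$\#$Clique to $p$-$\#$Path, and the reduction must exploit exact counts rather than mere existence, since the decision problem ``does $G$ contain a simple $\ell$-path'' is fixed-parameter tractable by color coding. The tool is polynomial interpolation. From $(G,k)$ I would construct in fpt time a family $G_1,\dots,G_T$ with $T=f(k)$, a common relevant length $\ell=f(k)$, and an integer parameter $s$ threaded through a ``consistency gadget'' built from \emph{bundles} of $s$ parallel subpaths, so that, writing $r(\sigma)$ for the number of bundles crossed by a path-skeleton $\sigma$,
\[
\#\{\text{simple }\ell\text{-paths of }G_s\}=\sum_{\sigma}s^{\,r(\sigma)}=\sum_{j=0}^{T-1}N_j\,s^{\,j},
\]
where $N_j$, the number of skeletons with $r(\sigma)=j$, is independent of $s$. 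The construction is to be arranged so that the skeletons with $r(\sigma)=0$ are exactly the ``honest'' encodings of a $k$-clique of $G$, giving $N_0=(k!)\cdot\#\{k\text{-cliques of }G\}$, while every inconsistent or degenerate traversal crosses at least one bundle. Evaluating the displayed identity at $s=1,\dots,T$ yields a Vandermonde system in $N_0,\dots,N_{T-1}$, nonsingular since the evaluation points are distinct; solving it recovers $N_0$, hence the $k$-clique count, from $T$ oracle calls.

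The main obstacle is the consistency gadget itself. One has to fix an order on the $\binom{k}{2}$ pairs $\{i,j\}\subseteq[k]$ and design an ``edge selector'' gadget that forces the path, while traversing it, to commit to an edge $(a,b)\in E$ read as ``layer $i$ gets $a$, layer $j$ gets $b$'', and — the delicate part — to interconnect these gadgets so that the $k$ committed values are \emph{the same} in all $\binom{k}{2}$ selectors even though a path carries only local memory. Simultaneously one must ensure that every cheating traversal — inconsistent layer values, a repeated vertex of $G$, or a non-edge — is \emph{forced} through at least one bundle, so that it lands only in the coefficients $N_j$ with $j\ge 1$ and is eliminated by the interpolation, whereas honest clique encodings avoid every bundle and feed only $N_0$. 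Verifying that the length stays $f(k)$, that $\{G_s\}$ is fpt-constructible, and that the skeleton/bundle decomposition behaves as claimed is then bookkeeping; getting a purely path-local structure to enforce \emph{global} consistency of the $k$ selected vertices is where the real work lies, and is precisely what makes this harder than the membership direction — and what separates the counting problem from its fixed-parameter tractable decision version.
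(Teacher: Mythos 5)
First, a point of order: the paper does not prove this statement --- it is imported verbatim from Flum and Grohe \cite{Flum2004} (their Theorem~21) and used as a black box --- so there is no in-paper proof to compare yours against, and your attempt must be judged on its own terms. The membership half is essentially sound: the layered graph $G'$ does have the property that its $(\ell+1)$-cliques correspond to ordered simple $\ell$-paths of $G$, the construction is fpt, and the cyclic variant handles cycles. Two wrinkles are worth fixing. The reduction is parsimonious only up to the factor $2$ (resp.\ $2\ell$), whereas $\#$W[1] is defined by Flum and Grohe via \emph{parsimonious} fpt reductions, so you must either justify closure under division by a computable function of the parameter or place the problem in $\#$W[1] directly through their machine/logical characterization (which is what they in fact do). Also, your opening reductions transfer \emph{hardness} from undirected to directed, but for \emph{membership} of the directed variants you need the layered construction on directed graphs; that is a one-line change, but ``it suffices to treat undirected paths'' is not literally true for both halves of a completeness proof.

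The genuine gap is in the hardness direction. Everything that makes the theorem true is concentrated in the consistency gadget, and you have specified only its desired postconditions --- honest $k$-clique encodings avoid every bundle, every cheating traversal crosses at least one, the path length and the number of interpolation points are $f(k)$ --- without constructing it. Forcing a single self-avoiding path, which carries only local memory, to make the \emph{same} choice of $k$ vertices across all $\binom{k}{2}$ edge selectors is exactly the obstruction that color coding exploits to make the decision version fixed-parameter tractable; asserting that such a gadget exists with all the listed properties is therefore very close to asserting the theorem itself, and you acknowledge as much. As written, the hardness half is a research plan rather than a proof. For comparison, Flum and Grohe do not argue this way: their reduction from $p$-$\#$Clique is linear-algebraic, relating cycle/path counts to homomorphism and walk counts (traces of powers of adjacency matrices of auxiliary product graphs) and recovering the clique count by solving a nonsingular linear system, so the combinatorial weight is carried by that nonsingularity rather than by a path gadget. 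Interpolation-over-gadgets arguments of the kind you sketch do appear in later work on counting paths and matchings, but there, too, the gadget construction and its verification constitute essentially the entire proof --- which is precisely the part missing here.
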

The complexity classes $\#$W[$t$], $t\geq 1$, introduced by Flum and Grohe, are relevant for parameterized counting problems corresponding to the classes of the W-hierarchy~\cite{Downey1999} which, in turn, qualify the difficulty of parametrized decision problems according to the type of circuits needed to determine them. Importantly, the class $\#$W[1] is believed to \textit{strictly} contain the class $\#$W[0] of all fixed-parameter tractable (FPT) counting problems.
We recall that a counting problem $P$ with input $x$ is said to be \textit{fixed-parameter tractable} if there is a computable function $f$ of the parameter $k$, a constant $c$ and an algorithm solving $P$ in $f(k) \,\text{poly}(|x|)$ steps. In this expression, $|x|$ designates the size of the input \cite{Flum2004,Grohe1999}. For the sake of simplicity, an algorithm achieving a $f(k) \,\text{poly}(|x|)$ time complexity will be said to be FPT.\\[1em] 


In this work, we describe a novel general purpose algorithm for the task of counting simple cycles and simple paths of fixed length $\ell$ and determine its time complexity:\\  
\begin{theorem}[Algorithm for cycle and path counting]
\label{ThmMain}
Let $G=(V,E)$ be a graph, possibly directed, on $N$ vertices and $M$ edges. Let $|S_\ell|$ be the number of connected induced subgraphs of $G$ on at most $\ell$ vertices. Let $\Delta$ be the maximum degree of any vertex on $G$ or, if $G$ is directed, let $\Delta$ be the maximum degree of any vertex on the undirected version of $G$.
Then all the simple cycles of length up to $\ell$ on $G$ can be counted in time
\begin{equation*}
O\left(N+M+\big(\ell^\omega+\ell\Delta\big) |S_\ell|\right),
\end{equation*}
and $O\big(N+M\big)$ space. 
The same complexity is achieved when counting the simple paths of length up to $\ell$ or the simple cycles/paths with fixed endpoints of length up to $\ell$.\\ 
\end{theorem}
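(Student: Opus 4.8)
The plan is to realize the running time as three additive contributions matching the summands of $O(N+M+(\ell^\omega+\ell\Delta)|S_\ell|)$. The term $O(N+M)$ is the cost of reading $G$ and building a sparse adjacency structure. The term $O(\ell\Delta\,|S_\ell|)$ will be the cost of \emph{listing}, one at a time, the family $S_\ell$ of connected induced subgraphs of $G$ on at most $\ell$ vertices. The term $O(\ell^\omega\,|S_\ell|)$ will be the cost of performing, for each such subgraph $H$, a constant number of operations on the $|V(H)|\times|V(H)|$ adjacency matrix $A_H$ (with $|V(H)|\le\ell$). At any time only the current subgraph and a bounded-depth recursion stack are held in memory, which is what will yield the $O(N+M)$ space bound.

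The combinatorial core is an identity of the shape
\[
\#\{\text{simple cycles of length}\le\ell\text{ in }G\}=\sum_{H\in S_\ell}\Phi_\ell(H),
\]
in which $\Phi_\ell(H)$ depends only on $H$ and is recoverable from the finite sequence of closed-walk counts $\tr(A_H),\tr(A_H^2),\dots,\tr(A_H^\ell)$. The reason such an identity exists, and why connectedness is the right restriction, is that every closed walk is supported on a \emph{connected} vertex set, and a closed walk of length $k$ is a simple cycle precisely when its support has exactly $k$ vertices; dually, a closed walk of length $k$ inside a given $H$ that visits \emph{all} of $V(H)$ is a ``Hamiltonian'' closed walk of $H$. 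M\"obius inversion on the subset lattice relates $\tr(A_H^k)$ to these Hamiltonian counts, and organizing the resulting alternating sums according to the connected subgraph each walk spans reduces everything to a sum over $S_\ell$, with all of the inclusion--exclusion absorbed into the local functionals $\Phi_\ell(H)$. The crucial and delicate point is that although extracting the Hamiltonian-cycle count of a single $H$ is intractable, the signs in the identity distribute that work over the whole family $S_\ell$, so the individual contribution $\Phi_\ell(H)$ is merely a polynomial-time---indeed $O(\ell^\omega)$-time---functional of $A_H$. Making this identity precise, and proving the analogous statements for simple paths and for the fixed-endpoint variants, is the main obstacle. For simple paths one uses the same ``connected support / Hamiltonian inside $H$'' dichotomy with closed walks replaced by open walks (counted by $\mathbf 1^\top A_H^k\mathbf 1$) and Hamiltonian cycles replaced by Hamiltonian paths; for a prescribed pair of endpoints one adds a single auxiliary edge joining them, which alters $N$, $M$, $\Delta$ and $|S_\ell|$ by at most a constant factor.

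Granting the identity, the per-subgraph work is routine. Given $H$ on $j\le\ell$ vertices, compute the characteristic polynomial of $A_H$ in $O(j^\omega)=O(\ell^\omega)$ time by a fast algorithm; recover $\tr(A_H),\dots,\tr(A_H^j)$ from its coefficients via Newton's identities, and then $\tr(A_H^{j+1}),\dots,\tr(A_H^\ell)$ via the Cayley--Hamilton recurrence, in $O(\ell)$ further arithmetic operations; finally evaluate $\Phi_\ell(H)$ and add it to a running accumulator. All of this is carried out in a single reusable workspace, so the total cost over $S_\ell$ is $O(\ell^\omega\,|S_\ell|)$.

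It remains to list $S_\ell$ efficiently. Here I would invoke Corollary~1 of Elbassioni~\cite{Elbassioni2015} (together with, if needed, a sharpening of its delay bound): after $O(N+M)$ preprocessing, the connected induced subgraphs of $G$ on at most $\ell$ vertices can be output one at a time with $O(\ell\Delta)$ work between consecutive outputs and in $O(N+M)$ working memory, the procedure maintaining only a current subgraph and a stack of depth $O(\ell)$. Feeding each enumerated $H$ into the linear-algebra routine above and accumulating $\sum_{H\in S_\ell}\Phi_\ell(H)$ then produces the required count in time $O(N+M)+\big(O(\ell^\omega)+O(\ell\Delta)\big)|S_\ell|$ and space $O(N+M)$. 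Running the identical procedure on $G$ with the auxiliary edge, or with open-walk vectors in place of traces, delivers the path counts and the fixed-endpoint counts within the same bounds, which establishes all of the claims.
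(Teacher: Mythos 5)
Your architecture matches the paper's at a high level (enumerate the connected induced subgraphs of size at most $\ell$, pay $O(\ell^\omega)$ linear algebra per subgraph, sum local contributions), but the heart of the theorem is the combinatorial identity, and you neither state it precisely nor prove it --- you explicitly defer it as ``the main obstacle.'' Worse, the shape you posit for it is not achievable as stated. You require $\Phi_\ell(H)$ to ``depend only on $H$'' and to be recoverable from $\mathrm{Tr}(\mathsf{A}_H),\dots,\mathrm{Tr}(\mathsf{A}_H^\ell)$ in $O(\ell^\omega)$ time. The identity the paper actually evaluates (Eq.~(2), imported from Giscard--Rochet) is
\[
\gamma(\ell)=\frac{(-1)^\ell}{\ell}\sum_{H\prec_{\text{conn}} G}\binom{|N(H)|}{\ell-|H|}(-1)^{|H|}\,\mathrm{Tr}\big(\mathsf{A}_H^\ell\big),
\]
whose coefficient involves $|N(H)|$, the number of neighbours of $H$ \emph{in the ambient graph} $G$ --- extrinsic data, not a function of $H$ alone. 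This is forced: when the M\"obius inversion you sketch is reorganized over connected supports, the multiplicity with which a connected $S$ appears across all size-$\ell$ supersets is governed by how $S$ embeds in $G$, i.e.\ by $|N(S)|$. A purely intrinsic $\Phi_\ell(H)$ computable from the trace sequence would have to reduce to Hamiltonian-cycle counts of each $H$ of size exactly $k$, which is exactly the intractable quantity you correctly say must be avoided. This error then propagates into your complexity accounting: in the paper the $O(\ell\Delta)$ per-subgraph cost is precisely the cost of computing $|N(H)|$, a step your version of the identity has no room for.

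Having lost the $\ell\Delta$ term there, you reassign it to the enumeration delay, and this second step is also unsupported. Elbassioni's Corollary~1, as quoted in the paper, gives total time $O\big(\ell^2\min\{(N-\ell),\ell\Delta\}(\log N+\Delta+\log \ell)\,|S_\ell|\big)$, which is far more than $O(\ell\Delta)$ per subgraph; your parenthetical ``if needed, a sharpening of its delay bound'' concedes the cited result does not deliver what you use. The paper instead uses Avis--Fukuda reverse search, whose total cost $O(N+M+\ell^2|S_\ell|)$ has its $\ell^2$ per-subgraph overhead absorbed into $\ell^\omega$ via $\omega\geq 2$; that route would have closed this particular gap without any new enumeration result. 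The per-subgraph linear algebra and the general ``sum over connected induced subgraphs'' philosophy are fine, but as written the proof is missing its key lemma and asserts it in a form that cannot hold.
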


The important result of Theorem~\ref{ThmMain} is that the time complexity of the general purpose algorithm presented here scales as $\text{poly}(\ell)|S_\ell|$. In comparison, we show in Section~\ref{Comparisons} that the time complexities of all other general purpose algorithms scale either with $N^\ell$, which is always larger than $|S_\ell|$ unless the graph is complete, or with the number $|\text{Cycle}_\ell|$ of simple cycles of length at most $\ell$ on the graph.\footnote{There is one exception to this observation: by extending an approach of Merris to count Hamiltonian cycles \cite{Merris1983}, we show in Section~\ref{Comparisons} that all simple cycles can be counted with a time complexity scaling as $\ell\,t_{\text{imm}}(\ell)|S_\ell|$, where $t_{\text{imm}}(\ell)$ is \textit{exponential} in $\ell$. Hence, this extension is still not competitive with the algorithm presented here.} From these observations, we expect that the algorithm presented here be the best available for graphs with less connected induced subgraphs than simple cycles,\footnote{Note in this context, backtracks, that is bidirected edges, count as simple cycles. Furthermore the orientation of the cycles counts as well. Thus, for exemple, the complete graph on three vertices with no self-loops, $K_3$, has two simple cycles of length 3 and three of length 2.} something we both confirm and precise in Section~\ref{Comparisons}. While deciding a-priori if a graph obeys this condition is difficult, we will see in Section~\ref{Experiments} that it is true for several real-world networks and most Erd\H{o}s-R\'enyi random graphs.\\[-.7em] 

\begin{remark}\label{FPTremark}
The algorithm presented here is FPT for the problem of counting simple cycles or simple paths of length $\ell$, parameterized by $\ell$, for the class of graphs where the number of connected induced subgraphs on at most $\ell$ vertices fulfils $|S_\ell|=O\big(f(\ell)\,\text{poly}(N)\big)$, with $f$ a computable function. This class is the class of bounded degree graphs, for which the existence of an FPT algorithm is not surprising. Indeed,  the number of simple cycles / paths of length $\ell$ is upper bounded by the number of walks of this length, which is at most $\Delta^\ell N=f(\ell)\,\text{poly}(N)$. In fact, on bounded degree graphs, even a direct search of the simple cycles achieves the same complexity and constitutes a FPT algorithm.
%
\end{remark}

~
\vspace{1mm}

We prove Theorem~\ref{ThmMain} in Section~\ref{Thm1Proof} using the analytical framework outlined in Section~\ref{SectionAnalytical} below. Following the proof of Theorem~\ref{ThmMain}, we compare in Section~\ref{Comparisons} the performance of the algorithm presented here with the time complexities achieved by existing algorithms for the same task. These fall in five families: i) combinatorial sieves; ii) cycle counts by zeon-algebras; iii) cycle counts from combinations of immanants; iv) special identities for short length cycles on undirected graphs; and v) counting via enumeration. 
In Section~\ref{Experiments}, we present numerical experiments validating the results of Theorem~\ref{ThmMain} and the comparisons of Section~\ref{Comparisons}. We then demonstrate the performance of the algorithm on real-world networks. The \textit{Matlab} implementation and data sets used for these experiments is available for download at \texttt{https://www-users.cs.york.ac.uk/$\sim$plg508/}. We conclude in Section~\ref{LabelledGraphs} with an extension of the algorithm for the enumeration of simple cycles on graphs with few labels.\\[1em]

\section{Analytical framework}\label{SectionAnalytical}
\subsection{Counting simple cycles}\label{CycleCount}
Our algorithm is based on a recent result from algebraic combinatorics relating the numbers of walks and of simple cycles on any (directed) graph. This result provides an explicit formula for the  ordinary generating function of the number $\gamma(\ell)$ of simple cycles of length $\ell$ multiplied $\ell$ \cite{Giscard2016b}
\begin{equation}\label{IntermediateComb}
\sum_{\ell} \ell \gamma(\ell) z^\ell = \sum_{H\prec_{\text{conn}} G} \mathrm{Tr}\left((z \mathsf{A}_H)^{|H|}(\mathsf{I}-z\mathsf{A}_H)^{|N(H)|}\right),
\end{equation}
where the sum runs over all weakly connected induced subgraphs $H$ of $G$. Recall that a directed graph is said to be weakly connected if and only if its undirected version is connected. Thus such subgraphs can be found by an algorithm for finding connected induced subgraphs running over the undirected version $G_\text{undir.}$ of $G$. In this expression, $|H|$ designates the number of vertices of the subgraph $H$ and $|N(H)|$ is the number of neighbours of $H$ in $G$. A neighbour of $H$ in $G$ is a vertex $v$ of $G$ which is not in $H$ and such that there exists at least one edge, possibly directed, from $v$ to a vertex of $H$ or from a vertex of $H$ to $v$. Finally, $\mathsf{A}_H$ is the adjacency matrix of $H$.
From the formula of Eq.~(\ref{IntermediateComb}) for the generating function of $\ell \gamma(\ell)$, we obtain $\gamma(\ell)$ analytically as  
\begin{equation}\label{EqforGammaL}
\gamma(\ell)=\frac{(-1)^\ell}{\ell}\sum_{H\prec_{\text{conn}} G}\binom{|N(H)|}{\ell-|H|}(-1)^{|H|}\,\mathrm{Tr}\left(\mathsf{A}_H^\ell\right).
\end{equation}
This explicit result forms the basis of the algorithm proposed here: counting the simple cycles can be achieved by evaluating Eq.~(\ref{EqforGammaL}).\\[-.7em]

Any algorithmic implementation of Eq.~(\ref{EqforGammaL}) can be easily compared with the best existing combinatorial sieve for counting simple cycles, that of Bax and Franklin \cite{bax1996finite,bax1994algorithms}, by observing that Eq.~(\ref{EqforGammaL}) involves a sum over the weakly \emph{connected} induced subgraphs of the graph. In contrast, Bax and Franklin's algorithm evaluates a formula involving a sum over \emph{all} the induced subgraphs, including the non-connected ones. Remarkably, in the worst case scenario|the complete graph|every induced subgraph is connected, making it look like both algorithms should have a comparable complexity.
On any other graph however, there are far more induced subgraphs than connected induced subgraphs. This crucial difference means that evaluating Eq.~(\ref{EqforGammaL}) \textit{must} yield a significant speed-up as compared to Bax and Franklin's algorithm. This argument is made rigorous by the proof of Theorem~\ref{ThmMain}, which we present in Section~\ref{Thm1Proof}, and see also Section~\ref{BaxCompare}. Before we proceed to this proof however, we present extensions of Eq.~(\ref{EqforGammaL}) for counting simple paths and simple cycles with fixed end points.\\

\subsection{Counting simple paths and simple cycles visiting a fixed vertex}
To find all the simple paths, we rely once more on a recent result from algebraic combinatorics according to which the ordinary generating function of the number $\pi_{i\to j}(\ell)$ of simple paths of length $\ell$ from vertex $i$ to $j$ is the $ij$-entry of \cite{Giscard2016b}
\begin{equation}\label{GenPi}
\sum_\ell \pi_{i\to j}(\ell) \,z^\ell = \Big(\,\sum_{H\prec_{\text{conn}}G} (z \mathsf{A}|_H)^{|H|-1}(\mathsf{I}-z\mathsf{A}|_H)^{|N(H)|}\,\Big)_{ij}\,.
\end{equation}
This expression employs the same notation as that presented in Section~\ref{CycleCount} with the exception of $\mathsf{A}|_H$, which represents the adjacency matrix of $G$ restricted to the connected induced subgraph $H$. That is, $$
\big(\mathsf{A}|_H\big)_{ij}=\begin{cases}
\mathsf{A}_{ij},&\text{if }i,j\in H,\\
0,&\text{otherwise,}
\end{cases}\quad i,j\in G.
$$
This construction allows one to formally write the sum of the various terms on the right hand side of Eq.~(\ref{GenPi}). Most importantly, from a computational point of view, multiplying by $\mathsf{A}_H$ or $\mathsf{A}|_H$ has the same time complexity $O(|H|^\omega)$. The number $\pi_{i\to j}(\ell)$ then follows analytically as
\begin{equation}\label{EqforPiL}
\pi_{i\to j}(\ell)=(-1)^{\ell+1}\sum_{H\prec_{\text{conn}} G\atop i,j\in H}\binom{|N(H)|}{\ell+1-|H|}(-1)^{|H|}\,\left(\mathsf{A}|_H^\ell\right)_{ij},
\end{equation}
where the sum now runs over all weakly connected induced subgraphs of $G$ containing both $i$ and $j$. 
With the notation introduced here, we may also extend the result of Eq.~(\ref{EqforGammaL}) to count only those simple cycles passing through any specified vertex $i$ with 
\begin{equation}\label{EqforGammaiL}
\gamma_{i}(\ell)=(-1)^\ell\sum_{H\prec_{\text{conn}} G\atop i\in H}\binom{|N(H)|}{\ell-|H|}(-1)^{|H|}\,\left(\mathsf{A}|_H^\ell\right)_{ii}.
\end{equation}
In particular, we verify immediately that 
\begin{equation}\label{EqGammaiLGammaL}
\gamma(\ell)=\frac{1}{\ell}\sum_{i=1}^N \gamma_i(\ell),
\end{equation}
as expected.

\newpage
\section{Proof of Theorem~\ref{ThmMain}}\label{Thm1Proof}
\subsection{The Algorithm: Evaluating Equation~(\ref{EqforGammaL})}
The algorithm consists simply in evaluating Eq.~(\ref{EqforGammaL}), (\ref{EqforPiL}) or (\ref{EqforGammaiL}), depending on what one wants to count.
Given the similar structures of these equations it is readily apparent that of Eqs.~(\ref{EqforGammaL}), (\ref{EqforPiL}) and (\ref{EqforGammaiL}), it is Eq.~(\ref{EqforGammaL}) that necessitates the greatest computational effort to be evaluated. This observation is best encapsulated by Eq.~(\ref{EqGammaiLGammaL}). For the sake of simplicity and to concretely illustrate our arguments, we will thus determine the time complexity explicitly only in the costliest situation: evaluating Eq.~(\ref{EqforGammaL}).\\ 

We first remark that the binomial coefficient appearing in Eq.~(\ref{EqforGammaL}) is non-zero if and only if $|H|\leq \ell\leq |N(H)|+|H|$. Thus, only those weakly connected induced subgraphs $H$ of $G$ on $|H|\leq \ell$ vertices contribute a term of Eq.~(\ref{EqforGammaL}) when calculating $\gamma(\ell)$. Equivalently, all the weakly connected induced subgraphs of $G$ such that $|H|\leq \ell$ provide all the terms needed to calculate all $\gamma(k),~k\leq \ell$. 
Therefore, for an algorithm based on Eq.~(\ref{EqforGammaL}) to count the simple cycles, it is sufficient for it to find weakly connected induced subgraphs of bounded size. This observation leads to the following result:
\begin{lemma}\label{BasicLemma}
Let $t\big(|S_{\ell}|\big)$ be the time complexity of finding all the weakly connected induced subgraphs of $G$ on at most $\ell$ vertices. 
Then the time complexity of determining the number $\gamma(k)$ of simple cycles of length $k$ for all $k\leq \ell$ is 
$$O\left(t\big(|S_{\ell}|\big)+(\ell^\omega+\ell\Delta) |S_{\ell}|\right).$$
\end{lemma}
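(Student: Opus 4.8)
The plan is to convert Eq.~(\ref{EqforGammaL}) into an algorithm that makes a single pass over the weakly connected induced subgraphs of $G$ of size at most $\ell$, spends $O(\ell^\omega+\ell\Delta)$ time on each of them, and accumulates the contributions to $\gamma(1),\dots,\gamma(\ell)$ simultaneously. The first step is to call the assumed enumeration procedure to list all such subgraphs in time $t(|S_\ell|)$; there are $|S_\ell|$ of them by definition. Because the binomial coefficient $\binom{|N(H)|}{k-|H|}$ in Eq.~(\ref{EqforGammaL}) vanishes unless $|H|\le k$, every subgraph that contributes to any $\gamma(k)$ with $k\le\ell$ is among those produced, so no subgraph outside this list is ever needed. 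It then remains to bound the work done per subgraph $H$, with $n_H:=|H|\le\ell$. I will also build, once, an adjacency-list representation of $G$ in time $O(N+M)$, which is dominated by $t(|S_\ell|)$ since any enumeration of the subgraphs must at least read the graph.

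Second, for a fixed $H$ I would compute $|N(H)|$ directly from the adjacency lists: mark the vertices of $H$ in a Boolean array of size $N$, then for each $v\in H$ scan its at most $\Delta$ incident edges and mark every endpoint lying outside $H$, recording each newly marked vertex in an auxiliary list so that it can be counted and all arrays can be reset in the same time bound before moving to the next subgraph. This costs $O(n_H\Delta)=O(\ell\Delta)$, and the same scan simultaneously extracts the $n_H\times n_H$ adjacency matrix $\mathsf{A}_H$ within the same bound. This is the source of the $\ell\Delta$ term.

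Third — and this is the step I expect to be the crux — I need the traces $\mathrm{Tr}(\mathsf{A}_H^k)$ for \emph{all} $k\in\{1,\dots,\ell\}$ in total time $O(\ell^\omega)$ for this one subgraph. Forming the powers $\mathsf{A}_H^2,\dots,\mathsf{A}_H^\ell$ one at a time would already cost $\Theta(\ell\cdot n_H^\omega)$, which is too slow, so the matrix powers must be avoided. Instead I would compute the characteristic polynomial of $\mathsf{A}_H$ in $O(n_H^\omega)$ arithmetic operations using a subcubic characteristic-polynomial algorithm (of Keller--Gehrig / Pernet--Storjohann type; any logarithmic factor is suppressed, consistently with the statement), recover the power sums $p_1,\dots,p_{n_H}=\mathrm{Tr}(\mathsf{A}_H),\dots,\mathrm{Tr}(\mathsf{A}_H^{n_H})$ from its coefficients by Newton's identities (a triangular system, $O(n_H^2)$ operations), and extend to $p_{n_H+1},\dots,p_\ell$ via the order-$n_H$ linear recurrence supplied by the Cayley--Hamilton theorem, costing $O\big((\ell-n_H)\,n_H\big)=O(\ell^2)$. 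Since $\omega\ge 2$, all of this is $O(\ell^\omega)$.

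Finally, with $|N(H)|$ and the traces $p_k$ in hand, I would add, for each $k$ from $n_H$ to $\ell$, the term $\tfrac{(-1)^k}{k}(-1)^{n_H}\binom{|N(H)|}{k-n_H}\,p_k$ prescribed by Eq.~(\ref{EqforGammaL}) to a running total for $\gamma(k)$; the needed binomial coefficients $\binom{|N(H)|}{0},\binom{|N(H)|}{1},\dots,\binom{|N(H)|}{\ell-n_H}$ are generated incrementally (the possibly large first argument is harmless, since at most $\ell+1$ of them are required and each is obtained from the previous one by a single multiplication and division), so this costs $O(\ell)$. Summing the per-subgraph cost $O(\ell^\omega+\ell\Delta)$ over the $|S_\ell|$ subgraphs and adding the enumeration cost $t(|S_\ell|)$ (which absorbs the $O(N+M)$ preprocessing) yields the claimed $O\big(t(|S_\ell|)+(\ell^\omega+\ell\Delta)|S_\ell|\big)$. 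The two points that require care are exactly the $O(\ell^\omega)$ trace computation — obtaining all of $\mathrm{Tr}(\mathsf{A}_H^1),\dots,\mathrm{Tr}(\mathsf{A}_H^\ell)$ without explicit matrix powering, via the characteristic polynomial and the Cayley--Hamilton recurrence — and the bookkeeping that lets a single sweep over the subgraphs deliver $\gamma(1),\dots,\gamma(\ell)$ all at once at no extra asymptotic cost.
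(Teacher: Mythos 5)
Your proposal is correct and follows the same overall structure as the paper's proof: enumerate the weakly connected induced subgraphs in time $t(|S_\ell|)$, observe via the vanishing of the binomial coefficient that only subgraphs on at most $\ell$ vertices matter, and then for each of the $|S_\ell|$ subgraphs spend $O(\ell\Delta)$ on determining $|N(H)|$ and $O(\ell^\omega)$ on the spectral data, accumulating the contributions to all $\gamma(k)$, $k\leq\ell$, in a single sweep. Where you genuinely add something is in the trace computation. The paper's proof simply asserts that counting ``the walks of length $\ell$ on this subgraph'' costs $|H|^\omega=O(\ell^\omega)$, which is loose: computing $\mathsf{A}_H^\ell$ by repeated squaring already costs $O(\ell^\omega\log\ell)$, and obtaining \emph{all} of $\mathrm{Tr}(\mathsf{A}_H^k)$ for $k\leq\ell$ by successive multiplications would cost $O(\ell^{\omega+1})$, which does not fit the stated bound. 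Your route---compute the characteristic polynomial of $\mathsf{A}_H$ in $O(n_H^\omega)$ operations, recover the power sums $p_1,\dots,p_{n_H}$ by Newton's identities (division-free in that direction), and extend to $p_{n_H+1},\dots,p_\ell$ by the Cayley--Hamilton recurrence in $O(\ell^2)$---is the cleanest way to make the per-subgraph cost rigorous, at the price only of the logarithmic factor you explicitly flag. The remaining bookkeeping (marking and unmarking vertices to get $|N(H)|$ in $O(\ell\Delta)$, generating the at most $\ell+1$ binomial coefficients incrementally, absorbing the $O(N+M)$ preprocessing into $t(|S_\ell|)$) matches what the paper does implicitly and is handled correctly.
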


\begin{proof}
This is straightforward from Eq.~(\ref{EqforGammaL}). First, all the weakly connected induced subgraphs on $k\leq \ell$ vertices must be found, costing $O\left(t\big(|S_{\ell}|\big)\right)$ time, by definition. Second, the terms of Eq.~(\ref{EqforGammaL}) must be evaluated, of which there are $|S_\ell|$ in total. Each term involves counting: i) the number of neighbours $|N(H)|$ of a subgraph $H$ on the graph; and ii) the walks of length $\ell$ on this subgraph, its size being $|H|\leq \ell$. The first step thus costs at most $|H|\Delta= O(\ell\Delta)$ time and the second step requires $|H|^\omega= O(\ell^\omega)$ time. 
\end{proof}
~\\[-1em]

\subsection{Time complexity of finding the connected induced subgraphs}
As we have seen it is necessary and sufficient to find all the weakly connected induced subgraphs of size at most $\ell$ to count simple cycles of length up to $\ell$. This can be done using the standard reverse search algorithm for finding connected induced subgraphs introduced by Avis and Fukuda \cite{Avis1996}, running on $G_{\text{undir.}}$ the undirected version of the graph $G$. 
The total running time of this algorithm in our case is \cite{Uehara1999, Elbassioni2015}
\begin{equation}\label{revsearchcost}
t\big(|S_{\ell}|\big)=O\left(N+M+\sum_{k=1}^{\ell-1}|S_{=k}|+\ell^2|S_{=\ell}|\right)=O\Big(N+M+\ell^2|S_\ell|\Big),
\end{equation}
where $M=|E|$ is the number of edges in $G_{\text{undir.}}$ and $|S_{=k}|$ designates the number of connected induced subgraphs on \emph{exactly} $k$ vertices in $G_{\text{undir.}}$. Furthermore this algorithm uses $O(N+M)$ space. We also remark that thanks to reverse search, the algorithm for counting simple cycles can be parallelised: indeed, the contribution of each connected induced subgraph to Eq.~(\ref{EqforGammaL}) can be calculated independently of the other subgraphs. Now combining Eq.~(\ref{revsearchcost}) with Lemma~\ref{BasicLemma} and noting that $\omega\geq 2$ concludes the proof of Theorem~\ref{ThmMain}.\\[-.7em]

\begin{remark}
The time complexity of the reverse search algorithm for finding the connected induced subgraphs was recently improved upon by Karakashian \text{et al.} \cite{Karakashian2013} and then further by K. Elbassioni \cite{Elbassioni2015}. Elbassioni describes a polynomial delay algorithm for this task yielding the following time complexity:\\[-1ex]
\begin{corollary2*}\label{ElbaLemma}
Finding all the connected induced subgraphs of size $k\leq \ell$ in a graph with maximum degree $\Delta$ can be done in 
$$
O\left(\ell^2\min\{(N-\ell),\ell\Delta\}(\log N+\Delta+\log \ell)\,|S_\ell|\right),
$$
total time.\\[-1ex] 
\end{corollary2*}
\noindent Elbassioni also presents an algorithm with a slightly worse time complexity, but ensuring a $O(N+M)$ space complexity, see Theorem~1 in \cite{Elbassioni2015}. Unfortunately, implementations of these recent algorithms have not yet been produced.\\
\end{remark}

\subsection{Understanding the time complexity}
In the worst case scenario, that is the complete graphs $K_N$, Theorem~\ref{ThmMain} implies that the time complexity for counting \textit{all} the simple cycles using the algorithm proposed here is $O(2^N N^\omega)$ since all induced subgraphs are connected, i.e. $|S_N|=2^N$. This is marginally better than the complexities reported in \cite{karp1982dynamic,bax1996finite,bax1994algorithms,schott2011complexity}. 
However, it is the performance of our algorithm on non-complete graphs that we want to highlight. To this end, it is helpful to recast the time complexity of the algorithm in terms of simple graph parameters.\\[-.7em]
%

We can do so by using an upper bound on the number $|S_{k}|$ of connected induced subgraphs on $k$ vertices that involves the maximum degree of any vertex. This result is due to Uehara:\\[-1ex]
\begin{lemma}[Uehara \cite{Uehara1999}]\label{UehLemma}
 Let $\Delta$ be the maximum degree of the undirected version $G_\text{undir.}$ of $G$. Then the number of connected induced subgraphs on exactly $k$ vertices in $G_\text{undir.}$ is bounded by
$$
|S_{=k}|\leq N\frac{(e\Delta)^k}{(\Delta-1)k^{2}},
$$
with $e$ the base of the natural logarithm. It follows that $$|S_\ell|=\sum_{k\leq \ell}|S_{=k}|=O\left(N\frac{\Delta^\ell}{(\Delta-1)\ell^2}\right).$$
\end{lemma}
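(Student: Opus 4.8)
The plan is to bound $|S_{=k}|$ by a count of rooted subtrees, compare that count with the analogous count in an infinite $\Delta$-regular tree, evaluate the latter through a Fuss--Catalan generating function, and finish with Stirling-type binomial estimates; the corollary then follows by summing a geometric-like series.

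First I would reduce to subtrees. Every $k$-vertex connected induced subgraph $G[U]$ has at least one spanning tree $T_U$, and $U\mapsto T_U$ is an injection from the connected $k$-subsets into the set of $k$-vertex tree subgraphs of $G$, since $U=V(T_U)$. As a $k$-vertex tree has exactly $k$ choices of root, this gives $|S_{=k}|\le \frac1k\sum_{v\in V}r_k(v)$, where $r_k(v)$ counts the $k$-vertex subtrees of $G$ containing $v$. Next I would bound $r_k(v)$ by a tree quantity: rooting such a subtree at $v$ and growing it outward, the root may be joined to at most $\Delta$ neighbours, while every other vertex — one incident edge leading back to its parent — may be joined to at most $\Delta-1$ new vertices, and edges of $G$ among already-chosen vertices are irrelevant. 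Encoding a subtree by recording, for each vertex and in the fixed order of $G$'s adjacency lists, which incident edges are kept shows $r_k(v)\le a_k$, where $a_k$ is the number of $k$-node subtrees containing the root of the rooted tree whose root has $\Delta$ distinguishable children and every other node has $\Delta-1$. Summing over $v$ yields $|S_{=k}|\le \frac Nk\,a_k$.

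The heart of the argument is the evaluation of $a_k$. Let $B(x)=\sum_{k\ge1}b_kx^k$ count subtrees containing the root of the infinite $(\Delta-1)$-ary rooted tree; since such a subtree is the root together with, in each of its $\Delta-1$ slots, either nothing or another such subtree, $B=x(1+B)^{\Delta-1}$, and Lagrange inversion gives the Fuss--Catalan value $b_k=\frac1k\binom{(\Delta-1)k}{k-1}=\frac{1}{(\Delta-2)k+1}\binom{(\Delta-1)k}{k}$. A root with $\Delta$ slots gives $A(x):=x(1+B)^{\Delta}=B(1+B)$, so the Lagrange--Bürmann formula gives $a_k=\frac1k\!\left(\binom{(\Delta-1)k}{k-1}+2\binom{(\Delta-1)k}{k-2}\right)$. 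Using $\binom{(\Delta-1)k}{j}\le\binom{(\Delta-1)k}{k}\le\big(e(\Delta-1)\big)^{k}\le(e\Delta)^{k}$ for $j\le k$ (via $\binom{n}{r}\le(en/r)^{r}$), together with the $\tfrac1k$ and $\tfrac{1}{(\Delta-2)k}$-type prefactors coming from the Fuss--Catalan denominators, one gets $a_k\le (e\Delta)^{k}/((\Delta-1)k)$ up to an absolute constant, and a careful version of these estimates yields $|S_{=k}|\le N(e\Delta)^{k}/((\Delta-1)k^{2})$ exactly. (The degenerate cases $\Delta\le2$ are trivial, since then every connected induced subgraph is a subpath or a whole component and $|S_{=k}|\le N$.)

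For the corollary, $|S_\ell|=\sum_{k\le\ell}|S_{=k}|$, and the ratio of consecutive summands of $\sum_{k\le\ell}(e\Delta)^{k}/((\Delta-1)k^{2})$ is $e\Delta\,(1-1/k)^{2}$, which exceeds a fixed constant $>1$ once $k\ge2$ and $\Delta\ge2$; the series is therefore dominated up to a constant factor by its last term, which gives the stated bound for $|S_\ell|$. I expect the genuinely delicate step to be the evaluation of $a_k$ in the third paragraph: the combinatorial reductions are routine and the corollary is a one-line geometric estimate, but obtaining the precise prefactor $\tfrac{1}{(\Delta-1)k^{2}}$ — rather than a looser polynomial factor in $k$ and $\Delta$ — requires simultaneously tracking the Fuss--Catalan denominators and the binomial-tail estimates, and carefully accounting for the larger degree of the root vertex.
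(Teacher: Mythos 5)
The paper does not actually prove this lemma: it is imported verbatim from Uehara's technical report. Your reconstruction --- spanning trees, the adjacency-list injection into subtrees of the rooted tree whose root has $\Delta$ children and all other nodes $\Delta-1$, and Lagrange inversion giving $a_k=\tfrac1k\bigl(\tbinom{(\Delta-1)k}{k-1}+2\tbinom{(\Delta-1)k}{k-2}\bigr)$ --- is the standard route to such bounds and is structurally sound (the formula for $a_k$ checks out, e.g.\ $a_3=9$ for $\Delta=3$ and $a_3=14850$ for $\Delta=100$). However, the final estimate as you sketch it does not deliver the stated prefactor. Replacing $\tbinom{(\Delta-1)k}{k-1}$ and $\tbinom{(\Delta-1)k}{k-2}$ by $\tbinom{(\Delta-1)k}{k}$ costs a factor of order $\Delta$ per unit shift of the lower index, so even after restoring the Fuss--Catalan denominators you only reach $a_k\le\frac{3}{(\Delta-2)k}\tbinom{(\Delta-1)k}{k}$, and the inequality you then need, $\frac{3(\Delta-1)}{\Delta-2}\tbinom{(\Delta-1)k}{k}\le(e\Delta)^k$, \emph{fails} if you only use $\tbinom{(\Delta-1)k}{k}\le(e(\Delta-1))^k$: the leftover ratio $(\Delta/(\Delta-1))^k=(1+\tfrac1{\Delta-1})^k$ tends to $1$ for fixed $k$ and large $\Delta$ and cannot absorb the constant $3$. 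The fix is to keep the Stirling correction, $\tbinom{(\Delta-1)k}{k}\le((\Delta-1)k)^k/k!\le(e(\Delta-1))^k/\sqrt{2\pi k}$; the factor $\sqrt{2\pi k}\ge\sqrt{4\pi}$ for $k\ge2$ then closes the gap (and $k=1$ is trivial). So the delicate step you correctly identified is genuinely not closed by the chain of inequalities you wrote down, though it is repairable within your framework.

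A second, smaller point concerns the corollary. Your geometric-domination argument is fine, but what it yields is $|S_\ell|=O\bigl(N(e\Delta)^\ell/((\Delta-1)\ell^2)\bigr)$: the $e^\ell$ is not a constant and cannot be dropped from the $O(\cdot)$ when $\ell$ is a variable parameter (on a high-girth $\Delta$-regular graph $|S_{=\ell}|$ really does grow like $(e\Delta)^\ell$ up to polynomial factors, not like $\Delta^\ell$). This looseness sits in the lemma's statement as quoted in the paper rather than in your argument, but you should not claim that your sum "gives the stated bound" with $\Delta^\ell$ in place of $(e\Delta)^\ell$.
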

Furthermore, on a graph with maximum degree $\Delta$, there are at most $M\leq N\Delta$ edges, so that, by Theorem~\ref{ThmMain}, the time-complexity of counting all the simple cycles of length $k\leq \ell$ is upper bounded by
\begin{subequations}
\begin{align}
O\left(N(\Delta+1)+(\ell^\omega+\ell\Delta)N\frac{\Delta^\ell}{(\Delta-1)\ell^2}\right)&=O\left(N\Delta+N(\ell^{-1}\Delta+\ell^{\omega-2})\Delta^{\ell-1}\right),\label{bound}\\
&\sim O\big(N\ell^{-1}\Delta^\ell\big),\label{estimate}
\end{align}
\end{subequations}
where we used that $\Delta/(\Delta-1)\leq 2$ as soon as the graph has a connected component with at least 3 vertices.\\[-.7em]  

The bound on $|S_\ell|$ obtained from Uehara's work is typically \emph{very far from tight}, especially on graphs that are far from regular, such as scale-free networks. Consequently, the time complexity predicted by Eq.~(\ref{estimate}) is typically much larger than that observed in numerical experiments. However, Eq.~(\ref{estimate}) simplifies the analysis of the time complexity of the algorithm, which will help us compare it with other algorithms for the same task. Observe also that we now easily verify the claim of Remark~\ref{FPTremark} that the algorithm is FPT on bounded degree graphs. In fact, on such graphs $\Delta=O(1)$, consequently the time complexity scales as $N$, that is the algorithm is fixed parameter \emph{linear}.\\[2em]

\section{Detailed comparisons with existing algorithms}\label{Comparisons}
\subsection{Sieve methods}\label{BaxCompare}
Bax and Bax and Franklin authored two articles detailing the use of combinatorial sieves to count simple cycles \cite{bax1996finite,bax1994algorithms}, which extend previous results by Karp \cite{karp1982dynamic} for counting Hamiltonian cycles. Similar techniques had previously been expounded by Khomenko and Golovko \cite{Khomenko1972,khomenko1978problem} and more recently by Perepechko and Voropaev \cite{Perepechko2009,PVRussianArticle}.\\[-.7em]

All these combinatorial sieves produce the simple cycles via sums over all the induced subgraphs of a graph, i.e. including the non-connected ones. 
There are $\tbinom{N}{\ell}$ such subgraphs of size $\ell$ on a graph on $N$ vertices. Assuming $\ell$ is fixed and much smaller than $N$, the number of subgraphs is $\Omega(N^\ell/\ell!)$. Consequently, counting all simple cycles of length up to $\ell$ using these sieves takes at least $\Omega(N^\ell/\ell!)$ time. If $\Delta$ is sub-linear in $N$, this time complexity is much larger than that achieved by the algorithm presented here, which takes at most $O(N\Delta^\ell/\ell)$ time. In other terms, Eq.~(\ref{EqforGammaL}), which only involves the \emph{connected} induced subgraphs, yields a significant speed-up. If instead $\Delta=\alpha N$, $0< \alpha\leq1$, the algorithm presented here is still $1/\alpha^{\ell}$ faster than other combinatorial sieves.\footnote{In addition, we have empirically observed that the time complexity of the algorithm proposed here scales with an effective parameter $\Delta_\text{eff}\ll\Delta$. What determines $\Delta_\text{eff}$ remains unclear.} 
\\ 

\subsection{Zeons algebras}
An algorithm for counting simple cycles based on zeon algebras has been proposed by Schott and Staples in \cite{schott2011complexity}. The algorithm relies on the observation that if one attaches a formal variable $\xi_e$ to each edge $e$ of the graph, such that any two such variable commute and $\xi_e^2=0$, then the corresponding labeled adjacency matrix $(\mathsf{A}_\xi)_{ij}:=\xi_{ij}\mathsf{A}_{ij}$ generates only simple cycles. In other terms, $\mathrm{Tr}(A^\ell_\xi)$ is the number of simple cycles of length $\ell$ on $G$. Unfortunately, this method requires formal matrix multiplications and cannot be implemented fully numerically.\\[-.7em]

Schott and Staples proved that the average time taken by this algorithm to count simple cycles of length $\ell$ is 
$O\big(N^4(1+q)^N\big)$ where $q\geq\ell N\Delta/(N^2-\ell)$ \cite{schott2011complexity}. In the typical situation where $\Delta,\ell\ll N$, this cost is therefore at least $O\big(N^4 e^{\ell \Delta})$. This is exponential in both $\ell$ and $\Delta$ and scales as the fourth power of $N$, hence always much larger than the $O(N\Delta^\ell/\ell)$ bound obtained earlier.\\

\subsection{Counting using immanants}
In 1983, R. Merris discovered an exact formula for counting the Hamiltonian cycles of a graph from a sum over at most $N$ of its immanants \cite{Merris1983,Merris2005}. On noting that any simple cycle is Hamiltonian on an unique connected induced subgraph of the graph, Merris' formula 
is easily extended to count all simple cycles of length up to $\ell$ via a sum over the $|S_\ell|$ connected induced subgraphs of size at most $\ell$.  In this sum, each term is itself a sum over at most $\ell$ immanants. Therefore, evaluating the formula takes $O\big(t(|S_{\ell}|)+t_\text{imm}(\ell) \ell |S_\ell|\big)$ time, with $t(|S_{\ell}|)$ and $t_\text{imm}(\ell)$ the times taken to find the connected induced subgraphs on at most $\ell$ vertices and to calculate the required immanants of $\ell\times \ell$ matrices, respectively.\\[-.7em]

In the same spirit, G. Cash described in 2007 an approach for counting simple cycles by solving a system of equations involving selected immanantal polynomials of the graph \cite{cash2007number}.
For length $\ell$ simple cycles, Cash's approach stems from the solution of a system involving $p(\ell)-p(\ell-1)$ equations, where $p(\ell)$ is the number of integer partitions of $\ell$. 
This number grows as $O(e^{x\,\sqrt{\ell}}\ell^{-3/2})$ with $x=\pi\sqrt{2/3}\sim 2.6$ and consequently solving the system takes $O(e^{7.7\,\sqrt{\ell}}\ell^{-9/2})$ time. Since the immanantal polynomials of the graph take $O\big(t_\text{imm}(N)\big)$ time to calculate, the cost of Cash's approach is $O\big(t_\text{imm}(N) e^{7.7 \sqrt{\ell}}\ell^{-9/2}\big)$.\\[-.7em]

Most importantly, we see that the time complexities of both methods are primarily influenced by the time taken to calculate the required immanants. Unfortunately, these are difficult to obtain. First, as recognized by Cash, they require computing the matrix of irreducible representations of the symmetric group $\mathcal{S}_x$, a very costly task for large $x$. Second, while the determinant of an $x\times x$ matrix requires only $O(x^3)$ time, the second immanant $d_2$ already costs $O(x^c)$ with $3<c\leq 4$ and computing the last immanant, the permanent, is itself a $\#$P-complete problem \cite{Valiant1979b}. The permanent is required by both Merris' and Cash's approaches, meaning that, assuming the exponential time hypothesis, $t_\text{imm}(x)$ grows exponentially in $x$. Comparing with Theorem~\ref{ThmMain}, we observe that neither approach can compete with the algorithm proposed here.\\[-.7em]

In the same vein, Giscard, Rochet and Wilson showed that simple cycles can be counted via a combination of permanents and determinants summed over the set of induced subgraphs of a graph \cite{Giscard2016c}, of which there are $\Theta(N^\ell)$. This particular approach only requires the computation of two immanants per subgraph thereby bypassing the need for computing the matrices of irreducible representations of large symmetric groups, yet requires all induced subgraphs to be considered and thus takes a prohibitive $O\big(N^\ell t_{\text{imm}}(\ell)\big)$ time.\footnote{Even if a reduction to connected induced subgraphs can be devised for this method, which would yield a $O(t(S_\ell) + t_{\text{imm}}(\ell)|S_\ell|)$ time complexity, it would only marginally improve upon Merris' approach and would still be worse than that of the algorithm presented here.}\\




\subsection{Counting short simple cycles on undirected graphs}
When only short simple cycles on \textit{undirected graphs} are of interest, these may be counted via a set of special identities involving the adjacency matrix. This approach was pioneered by Harary and Manvel in the 1970s and has remained popular ever since \cite{Harary1971, Alon1997, Chang2003,Movarraei2016}. In particular, Alon, Yuster and Zwick presented an algorithm for evaluating these identities up to $\ell=7$ in $O(N^\omega)$ time and $O(N^2)$ space \cite{Alon1997}. This cost grows for longer cycles, being $O(N^{\omega+1})$ when $\ell=8$, and then $O(N^{\lfloor\ell/2\rfloor}\log N)$ when $\ell=9,~10$. To the best of our knowledge, no special identity for counting $\ell>10$ cycles has been found. There is little doubt that these exist however. Yet, given that the formula for $\ell=10$ already involves 160 terms  \cite{Perepechko2009}, any such identity would be extremely cumbersome.\\[-.7em] 

From this discussion, we conclude that if the graph is undirected, only short simple cycles of length $\ell\leq 7$ are desired and $N\leq \Delta^{\frac{\ell}{\omega-1}}$, then we expect Alon, Yuster and Zwick's approach (AYZ) to be faster than the algorithm presented here. In practice, we find AYZ to be faster in many cases where this condition is not met, presumably because of differing constant factors hidden by the $O(.)$ notation. These, in turn, might stem from the fact that AYZ is not a general purpose algorithm which relies on specific, optimised, ways of counting the simple cycles. Since the time complexity of AYZ scales with $N^\omega$ however, for any family of graphs where $\Delta=o(N)$, there is a graph size above which the algorithm presented here is faster than AYZ.\\[-.7em]

Finally, we note that the space required for running AYZ scales as $O(N^2)$ rather than $O(N+M)$, the former being much larger than the latter on sparse graphs.  We found this to be AYZ main limitation in practice,\footnote{That is, beyond the 
fact that AYZ is limited to $\ell=7$ on undirected graphs.} barring us from 
making computations on networks with over 12,000 nodes. This memory cost 
is unavoidable since AYZ necessitates the computation of powers of the 
adjacency matrix $\mathsf{A}$ of the graph, which quickly become dense 
even on large sparse graphs. Recall in particular, that $\mathsf{A}^x$ is full 
for $x$ larger than the graph diameter.\\


\subsection{Counting simple cycles via enumeration}
Enumerating the simple cycles or simple paths of a graph, that is producing their vertex sequences, is much more time consuming than simply counting them. The best general purpose algorithm for this task is still Johnson's 1975 landmark algorithm \cite{Johnson1975,Mateti1976}, which achieves a time complexity of $O\left((N+M)\,(|\text{Cycle}_N|+1)\right)\sim O\left(N\Delta |\text{Cycle}_N|\right)$. In this expression, $|\text{Cycle}_N|$ is the total number of simple cycles (or of simple paths) on $G$, including backtracks, that is simple cycles of length 2.
This result was recently improved on undirected graphs to $O(N\,(|\text{Cycle}_N|+1)+M)$, a scaling which is optimal for this task \cite{Birmele2013}.\\[-.7em]

In the worst case scenario, i.e. on the complete graph $K_N$,  $|\text{Cycle}_N|=O(N!)$, that is enumerating all simple cycles takes factorial time. For this reason, counting simple cycles via enumeration has often been deemed greatly inefficient,  in particular in comparison with the ``only" exponential cost $O\big(2^N\text{poly}(N)\big)$ achieved by the algorithm presented here as well as other approaches \cite{bax1994algorithms}. This conclusion follows from a peculiarity of dense graphs however and for sparse graphs it is not so.\\[-.7em]

Indeed, evaluating Eq.~(\ref{EqforGammaL}) to count all the simple cycles on a graph costs $O(N^\omega|S_N|)$. It follows that if $N^\omega |S_N|\geq N\Delta|\text{Cycle}_N|$, then Johnson's algorithm and its variants can count all the simple cycles of a graph via enumeration faster than any combinatorial sieve, including the one presented here. When counting simple cycles of fixed maximum length $\ell$, Johnson's algorithm takes $O(N+M+(\ell+\ell\Delta)|\text{Cycle}_\ell|)$ time, $|\text{Cycle}_\ell|$ being the total number of simple cycles of length up to $\ell$. This means in order for the algorithm presented here to be faster than Johnson's the following must hold
$$\left(\frac{\ell^{\omega-1}}{\Delta}+1\right) |S_\ell|\leq |\text{Cycle}_\ell|.$$
Unfortunately, it is very difficult to estimate the ratio $|S_\ell|/ |\text{Cycle}_\ell|$ in a preprocessing stage so as to decide which algorithm to use. Furthermore, the problem of characterising graphs for which the number of connected induced subgraphs is larger than the number of simple cycles is, to the best of our knowledge, an open mathematical question beyond the scope of this work. Rigorously, we may only conclude that for any number $N$ of vertices, there must be a critical density above which the algorithm presented here will be faster than Johnson's. 
We undertake an empirical study of this density in Section~\ref{Experiments} on Erd\H{o}s-R\'{e}nyi random graphs and show it to be so small that the resulting graphs are disconnected with very high probability. 
In addition, we also study two families of real-world networks exemplifying the interplay between connected induced subgraphs and simple cycles.\\
%


\subsection{Relation to subgraph counting algorithms}
Given a pattern graph $H$ and a graph $G$, the \emph{subgraph counting problem} 
is to determine the number of (induced) subgraphs of $G$ that are isomorphic to 
$H$. This problem is well studied in undirected graphs and generalizes the problem 
of counting cycles. Therefore, we briefly summarize results on the subgraph 
counting problem, in particular, when  parameterized by the size of the pattern 
graph $k=|V(H)|$.\\[-.7em]

When $G$ is a planar graph on $N$ vertices the problem can be solved in time 
$N 2^{O(k)}$~\cite{Dorn2010} improving the seminal FPT algorithm by Eppstein, 
which achieves $N k^{O(k)}$ time~\cite{Eppstein1999}.
Ne\v{s}et\v{r}il and Ossona de Mendez introduced \emph{classes of graphs with 
bounded expansion}, which include the planar graphs as well as the graphs of 
bounded degree. For these classes they have shown that the number of satisfying 
assignments of a Boolean query with a fixed number of free variables can be 
counted in time linear in $N$~\cite[Theorem 18.9]{Nesetril2012}, which 
solves the subgraph counting problem for patterns of a fixed size as a special case.
An improved algorithm tailored to counting subgraphs was proposed by Demaine et
al.~\cite{Demaine2014} and achieves a time complexity of $O(6^k t^k k^2 N)$,
where $t$ is the height of a tree-depth decomposition of $G$. Again the approach
yields a linear time FPT algorithm in graph classes of bounded expansion when
parametrised by $k$.
The running times of the above mentioned algorithms typically hide enormous 
constants and, to the best of our knowledge, have for this reason not been applied 
in practice.\\[-.7em]

Technically related to our work is the method introduced by Amini et 
al.~\cite{Amini2012} to count subgraphs by homomorphisms using a combinatorial 
sieve. Their approach can be seen as a generalisation of the standard sieve methods for 
counting cycles to arbitrary graphs, but does not overcome the drawbacks 
regarding running time discussed in Section~\ref{BaxCompare}.\\[2em]

\section{Experiments}\label{Experiments}
In this section we present numerical evidence for the performance of a \textit{Matlab} implementation of the algorithm presented in this work. Note that this implementation incorporates a preprocessing stage which removes all sources, sinks and isolated vertices of the graph.
We compare it with both Johnson's and AYZ algorithms since, following Section~\ref{Comparisons}, these are the only competitive algorithms for counting simple cycles. 
For the former we use Howbert's freely available \textit{Matlab} implementation \cite{Howbert2011}, while for the latter we wrote a \textit{Matlab} code, available for download. 
All the calculations reported here have been made on a MacBook Pro laptop with 3.1 GHz Intel Core i7 processor and 8 GB of RAM running \textit{Matlab} R2016a.\\[-.5em]

\subsection{Erd\H{o}s-R\'{e}nyi random graphs}
We begin by considering undirected Erd\H{o}s-R\'{e}nyi random graphs $\text{ER}(N,p)$. These random graphs are determined by two parameters: the number $N$ of vertices and the probability $p$ that any one undirected edge in the graph exists (with the exception of self-loops). The expected number of edges in $\text{ER}(N,p)$ is $pN(N-1)/2$ so that the expected graph sparsity equals $p$.\\[-.7em] 

\subsubsection{Comparison with Johnson's algorithm}
We undertook the comparison on two ranges of parameters: small graphs $5\leq N\leq 30$, on which we compared the times taken by both algorithms to count all the simple cycles; and on large graphs $N\geq 1,000$, for which we counted simple cycles of length up to 5 only.\\[-.7em]

On small graphs, for each value of $N$ from 5 to 22 as well as for $N=25$ and $30$, we determined the critical value $p_{\text{critical}}(N)$ of $p$ below which Johnson's algorithm is the fastest by incrementing $p$ from 0 to 1 by steps of $10^{-2}$ at $N$ fixed. For each value of $p$, we ran both algorithms 20 times and compared the averaged time taken, except for $N=25$ and $30$ where we ran the algorithms only twice per value of $p$.
The results are shown on Figure~\ref{ERRegion}.
Empirically we observe that for small graphs, $N\leq 30$, Johnson's algorithm is faster to count all simple cycles whenever $p\leq p_{\text{critical}}(N)\simeq 4.3/N$. Equivalently, this means that Johnson's algorithm can be expected to be faster than the algorithm presented here whenever the average degree is close to 4 or smaller.\\[-.7em]

On large graphs $p_{\text{critical}}(N)$ falls further, being seemingly less than $1/N$ when counting simple cycles of length up to 5 on Erd\H{o}s-R\'{e}nyi graphs with 20,000 vertices. In any case, we remark that for $N\gg 1$ and  $p< \log(N)/N$, $\text{ER}(N,p)$ is known to be almost surely disconnected. Given that we observed that $p_{\text{critical}}(N)=O(1/N)$, it seems that for Johnson's algorithm to be the fastest, an Erd\H{o}s-R\'{e}nyi random graph must be so sparse as to be disconnected in many small components.\\   
\begin{figure}[t!]
\begin{center}
\includegraphics[width=1\linewidth]{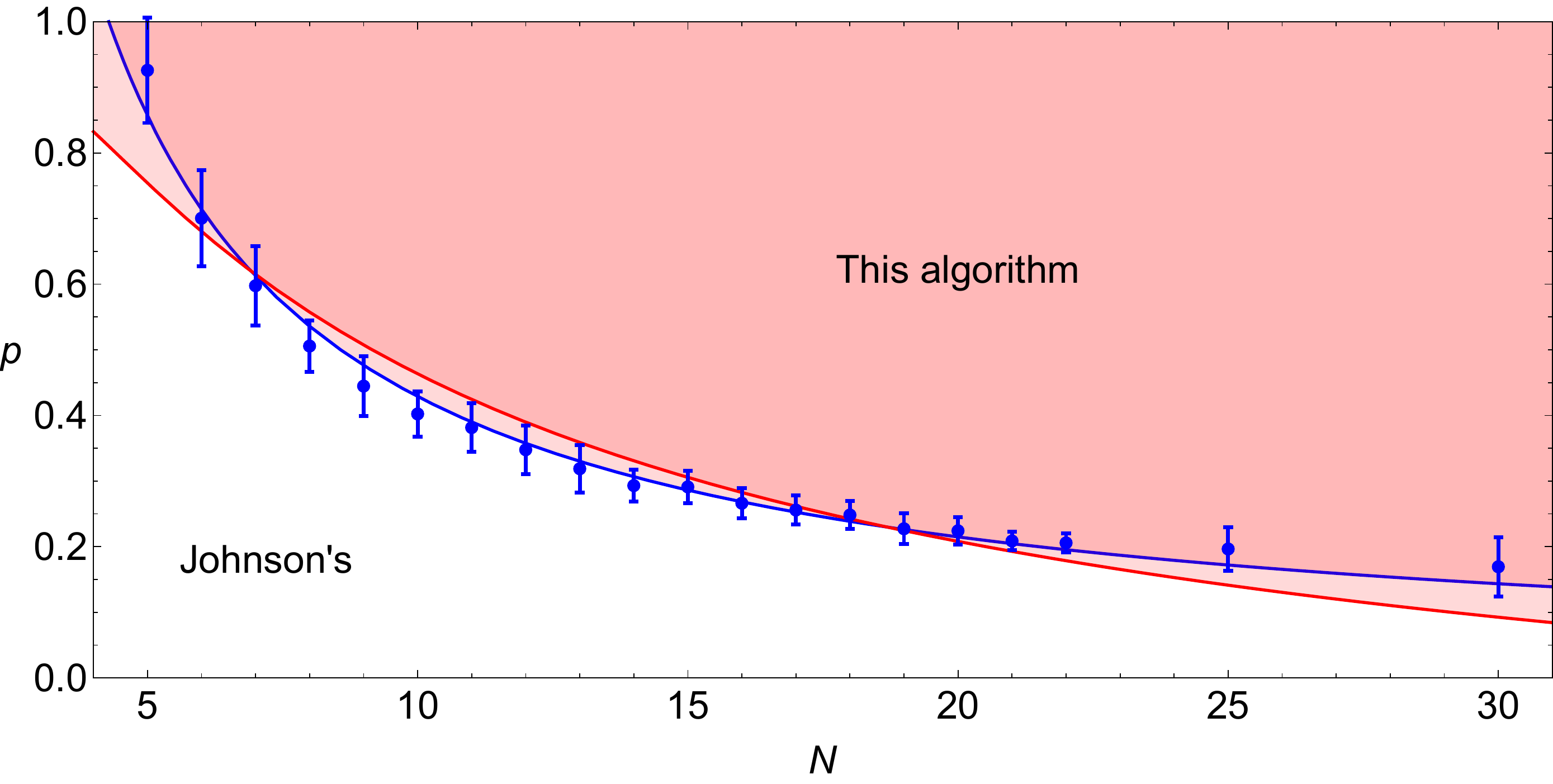}
\end{center}
\caption{\label{ERRegion} Experimental comparison of Johnson's algorithm with the algorithm presented here. The blue dots are the observed values of the critical edge probability below which Johnson's algorithm is the fastest. 
The red line shows the best fit of the blue data points of the form $p_{\text{critical}}(N)=a+b\log(N)/N$, which is
$p_{\text{critical}}(N)=-0.27+ 3.17\log(N)/N$.
The blue line shows the best fit of the blue data points of the form $p_{\text{critical}}(N)=a+b/N$, which is
$p_{\text{critical}}(N)=8.5\times10^{-4} + 4.28/N$. Since the latter fit is clearly  better than the former, we retain $p_{\text{critical}}(N)\simeq 4.3/N$ as model when discussing $p_{\text{critical}}$ in the text.}
\end{figure}

\subsubsection{Comparison with AYZ}
The algorithm of Alon, Yuster and Zwick is almost always the fastest to count simple cycles of length only up to 7 on undirected Erd\H{o}s-R\'enyi random graphs. Indeed, we find that as soon as the graph is \textit{denser} than  $p_{\text{critical}}\approx 1.23/N$, for $N\lesssim 10,000$, then AYZ is faster than the algorithm presented here. The value of $p_{\text{critical}}$ slowly increases with larger values of $N$ and is around $p_{\text{critical}}\simeq 1.3/N$ for $N\sim 12,000$. 
Unfortunately, the memory consumption of AYZ barres us from directly studying the performances of both algorithms on larger graphs. Rather, an extrapolation of the increase of $p_{\text{critical}}$ suggests that it crosses $\log(N)/N$ when $N$ is well over $10^6$. This is widely beyond what can be reached by AYZ, and so large as to render the algorithm presented here prohibitively slow. We must thus conclude that AYZ remains the fastest algorithm on Erd\H{o}s-R\'enyi random graphs for counting simple cycles of length 7 or less.\\[2em]

\subsection{Real-world networks}
We applied the algorithm presented in this work as well as those of AYZ and Johnson to compare their performances on three real-world networks, two of which are undirected and one is both weighted and directed:\\[-.5em]
\begin{description}
 \item [\fauxsc{Actors}:] This network represents collaborations between movie actors and 
was generated and analysed by Barabási and Albert~\cite{Barabasi1999}.
Each actor constitutes a vertex and an edge represents that the two actors were 
cast together in the same movie.\\[-1em]
 \item [\fauxsc{Infectious}:] A network representation of the face-to-face contacts 
 between visitors of the exhibition \textit{Infectious: Stay Away} held in Dublin in 
 2009~\cite{Isella2011}.
 Each edge corresponds to face-to-face interaction lasting for at least 20 seconds.\\[-.5em]
\end{description}
All data sets were obtained from the \textsc{Konect} website~\cite{Kunegis2013}, where
further information on the data sets is available. 
The networks are undirected and have parallel edges. In order to systematically 
study the effect of sparsity, we generated several instances of each network by 
deleting edges with multiplicity below a given threshold as follows.
Starting with the graph with all edges present, we successively removed all edges with 
multiplicity $1, 2, \dots$. A new instance is created whenever at 
least 80 edge were removed from the previous instance. This results in a sequence of graphs with decreasing density progressively retaining only the most important edges. The sequence based on the \textsc{Actors} network comprises 31 graphs, while the sequence based on the \textsc{Infectious} network comprises 8 graphs.\\[-.7em]
 

\begin{description}
 \item [\fauxsc{Wikielections}:] A weighted directed network representing the votes of Wikipedia users during elections to
adminship \cite{StanfordWiki}. Each user corresponds to a vertex, and a directed edge from user $u_1$ to user $u_2$ exists if and only if $u_1$ voted during the election of $u_2$. This edge is given a $+1$ weight if user $u_1$ supported $u_2$ candidacy and $-1$ otherwise.
No pruning of the edges was operated on this network.\\[-.5em]
\end{description}
\noindent The Wikielections network has  8289 vertices and 12915 directed edges. It is a scale-free graph with maximum out-degree $\Delta^{\text{out}}=266$ and maximum in-degree $\Delta^{\text{in}}=191$. Being directed, the Wikielections network cannot be studied with  AYZ, which is limited to undirected graphs, nor can it be studied with Howbert's implementation of Johnson's algorithm.\\[-.5em]

In all cases, in order to accurately describe the performances of the algorithm presented here, we provide, for each graph, the time $\tau_\ell$ it takes for counting all simple cycles of length up to $\ell$, as well as the parameter governing the scaling of this time with $\ell$. Indeed, while $\tau_\ell$ is upper bounded by $N\Delta^\ell/\ell$ as per Eq.~(\ref{estimate}), empirically, we find it to scale as $\tau_\ell\propto\Delta_{\text{eff}}^\ell$ with $\Delta_{\text{eff}}<\Delta$. This effective scaling parameter is determined numerically by fitting $\tau_\ell$ with $a\times\Delta_{\text{eff}}^\ell+b$, where $a$ and $b$ are fitted constants. Surprisingly, we did not find any relation between $\Delta_{\text{eff}}$ and the maximum, mean, or median of the vertex degrees. What determines its value in practice remains unclear.

\begin{table*}[h!]
\begin{center}
\renewcommand{\arraystretch}{1}
\begin{tabular}{|C{2cm}|c|C{1.5cm}|C{.9cm}|C{.8cm}|C{1.9cm}|L{4cm}|}
\hline
\hline  
\textsc{Instance \#}&$N$&$M$&$\Delta$&$\Delta_\text{eff}$&\textsc{Time (sec.)}&\hspace{.7cm}\textsc{Simple cycles}\\
\hline
31&45&96&7&1.4&A:~$4\times10^{-2}$ J:~$3\times 10^{-2}$&$\ell=10:~$0, 48, 32, 48, 48, 44, 16, 0, 0, 0 \\
\hline
30&90&260&12&1.6&\text{A}:~1.34\hspace{10mm} J:~4.06&$\ell=10:~$0, 130, 202, 652, 2044, 5876, 14046, 25700, 33148, 29820 \\
\hline
29&143&428&17&3.3&\text{A}:~19\hspace{10mm} J:~76&$\ell=10:~$0, 214, 356, 1328, 4946, 18608, 62038, 175710, 398864, 705874\\
\hline
28& 179&588&24&4.6&A:~248\hspace{10mm} J:~667&$\ell=10:~$0, 294, 566, 2564, 11830, 56066, 246604, 970674, 3284880, 9284612\\
\hline
27&125&748&26&5.4&A:~970\hspace{10mm} J:~2660&$\ell=10:~$0, 374, 798, 4110, 22332, 125084, 665030, 3246496, 14068582, 52877616\\
\hline
26&257&914&30&5.8&A:~2829\hspace{10mm} J:~8349&$\ell=10:~$0, 457, 1018, 5726, 34724, 218028, 1310046, 7326752, 37074200, 166360444\\
\hline
25&310&1118&36&7.3&A:~12301\hspace{10mm}J:~40124&$\ell=10:~$0, 559, 1294, 7986, 53828, 377298, 2538470, 16045588, 92969672, 485843893\\
\hline
24&376&1414&47&9&A:~20992\hspace{10mm}J:~$>10^5$&$\ell=9:~$0, 707, 1728, 11686, 85300, 650344, 4744026, 32672232, 207557400\\
\hline
23&423&1610&49&9.6&A:~5766\hspace{10mm}J:~$>3\times 10^4$&$\ell=8:~$0, 805, 2058, 15008, 118748, 980604, 7827540, 59395940\\
\hline
22&470&1854&51&11.8&A:~$1.02\times10^4$\hspace{10mm}J:~$>5\times 10^4$&$\ell=8:~$0, 927, 2476, 18674, 154346, 1333982, 11215982, 90027620\\
\hline
17&863&3894&75&11.8&A:~$2.8\times10^4$\hspace{10mm}J:~$-$\hspace{10mm}AYZ:~0.45&$\ell=7:~$0, 1947, 6178, 61640, 688510, 8187720, 96547224\\
\hline
12&1911&10428&119&22&A:~$5.7\times10^4$\hspace{10mm}J:~$-$\hspace{10mm}AYZ:~0.91&$\ell=6:~$0, 5214, 22060, 330498, 5625464, 105644852\\
\hline
7&6085&48916&238&24&A:~$7.3\times10^4$\hspace{10mm}J:~$-$\hspace{10mm}\hspace{10mm}AYZ:~4.8&$\ell=5:~$0, 24458, 181724, 5127548, 169365078 \\
\hline
4&19199&235964&609&$\sim70$&A:~$4\times 10^5$\hspace{10mm}J:~$-$\hspace{10mm}AYZ: OOM&$\ell=4:~$ 0, 117982, 1608856, 103794848\\
\hline
1:~Full graph&382219&30076166&3956&$-$&A:~$-$\hspace{10mm}J:~$-$\hspace{10mm}AYZ: OOM&$\ell=4:$~$-$\\
 \hline 
 \hline
\end{tabular}
\caption{\label{ActorsResults}Counting simple cycles on some graphs of the \textsc{Actors} data set. The time taken by the algorithm presented in this work is labelled by "A", while "J" refers to the time taken by Johnson's algorithm.  We report the time taken by AYZ only when simple cycles of length 7 or less are counted. Because of memory limitations, we could not run AYZ on graphs 1 to 4, which we designated by "OOM" for "out of memory".
}
\end{center}
\end{table*}

\FloatBarrier

We now turn to the \textsc{Infectious} family of graphs. Contrary to the \textsc{Actors} set of graphs, we found Johnson's algorithm to run faster on this data set than the algorithm presented here.

\begin{table}[h!]
\begin{center}
\renewcommand{\arraystretch}{1.1}
\begin{tabular}{|C{2cm}|c|c|c|c|C{2cm}|L{5cm}|}
\hline
\hline  
\textsc{Instance $\#$}&$N$&$M$&$\Delta$&$\Delta_\text{eff}$&\textsc{Time (sec.)}&\hspace{1.2cm}\textsc{Simple cycles}\\
\hline
8&14&7&1&1&A:~$1.7\times 10^{-3}$\hspace{10mm} J:~$1.2\times10^{-3}$&$\ell= 10:~$0, 7, 0, 0, 0, 0, 0, 0, 0, 0\\
\hline
7&29&30&2&1&A:~$4.7\times 10^{-3}$\hspace{10mm} J:~$4.3\times10^{-3}$&$\ell= 10:~$0, 15, 0, 0, 0, 0, 0, 0, 0, 0\\
\hline
6&42&50&2&1&A:~$8.3\times 10^{-3}$\hspace{10mm} J:~$1.3\times10^{-2}$&$\ell= 10:~$0, 25, 4, 0, 0, 0, 0, 0, 0, 0\\
\hline
5&91&116&4&1&A:~$2.1\times 10^{-2}$\hspace{10mm} J:~$2.2\times10^{-2}$&$\ell= 10:~$0, 58, 12, 2, 0, 0, 0, 0, 0, 0\\
\hline
4&236&394&4&1.01&A:~$3.3\times 10^{-2}$\hspace{10mm} J:~$1.7\times10^{-1}$&$\ell= 10:~$0, 197, 94, 60, 16, 0, 0, 0, 2, 2\\
\hline
3&337&964&15&3.8&A:~$220$\hspace{10mm} J:~$33.8$&$\ell= 10:~$0, 482, 572, 1340, 3552, 9490, 23504, 50900, 92630, 143620\\
\hline
2&368&1760&24&6.4&A:~$2.3\times 10^4$\hspace{10mm} J:~$2.2\times 10^4$&$\ell= 10:~$0, 880, 2322, 11506, 65356, 391646, 2391434, 14585954, 87432978, 509475403\\
\hline
1:\,Full graph&410&5530&50&16.8&A:~$4.39\times10^4$\hspace{10mm}J:~$1.8\times 10^4$\hspace{10mm}AYZ:~0.4&$\ell= 7:~$0, 2765, 14228, 162574, 2142470, 30356160, 446411676\\
 \hline 
 \hline
\end{tabular}
\caption{\label{InfectiousResults}Counting simple cycles on the graphs of the \textsc{Infectious} data set. The time taken by the algorithm presented in this work labelled by "A", while "J" refers to the time taken by Johnson's algorithm. We report the time taken by AYZ only when simple cycles of length 7 or less are counted. 
}
\vspace{-5mm}
\end{center}
\end{table}

\FloatBarrier

~\\

Finally, we turn to the \textsc{Wikielections} network which, as indicated earlier,  is both directed and signed. The sign of a simple cycle being the product of the signs of its edges, we propose to demonstrate the algorithm capabilities by finding the numbers $p_\ell$ and $n_\ell$ of positive and negative simple cycles of length $\ell\leq 6$, respectively.
Indeed, since it is sufficient to run the algorithm twice to obtain both $p_\ell$ and $n_\ell$. More precisely, running the algorithm once on the signed network yields $p_\ell-n_\ell$, while running it on its unsigned version provides $p_\ell+n_\ell$.
\begin{table}[h!]
\begin{center}
\renewcommand{\arraystretch}{1.2}
\begin{tabular}{|C{2.5cm}|c|c|c|c|c|c|}
\hline
\hline  
\textsc{Length} &1&2&3&4&5&6\\
\hline
\textsc{Time taken (sec.)} &$6\times10^{-3}$&
$9\times10^{-2}$&2.2&84&2981&$1.04\times10^5$\\
\hline
\textsc{Positive simple cycles} &6& 337&1683& 16369&182657&2170663\\
\hline 
\textsc{Negative simple cycles} &5&12&253&3323&46792&663136\\
\hline
\textsc{Total} &11&349&1936&19692&229449&2833799\\
\hline
\hline
\end{tabular}
\caption{\label{WikiResults} Number of positive and negative simple cycles of length $\ell$ up to 6 on the Wikielections directed network and the time taken to count them. Here $\Delta_\text{eff} \simeq35$.
}
\end{center}
\end{table}
\FloatBarrier

%

\section{Finding labelled simple cycles and simple paths}\label{LabelledGraphs}
In some applications, such as chemoinformatics, counting the simple cycles or simple paths of a network is not sufficient. Rather, the vertices of the network may be labelled and it is then necessary to find all sequences of labels corresponding to simple cycles/paths on the network. If there are as many different labels as vertices, that is each vertex has its own label, then this task is best addressed by Johnson's algorithm discussed earlier \cite{Johnson1975}.\\ 

In typical applications however, the number of labels is much smaller than the number of vertices. For example, on a network representing a molecule where vertices are atoms and edges are bonds, vertex labels represent the various atomic species, e.g. carbon, hydrogen, nitrogen etc. There is less than 10 such species in the vast majority of organic molecules in available data sets. In addition, in the standard representation of molecules, hydrogen vertices are omitted altogether and the label of carbon atoms is put to the default value 1 (that is no label), further reducing the number of different labels.\\  

Finding all simple cycles/paths label sequences in such situations can be done with Eq.~(\ref{EqforGammaL}), (\ref{EqforPiL}) or (\ref{EqforGammaiL}) with the following time complexity:\\
\begin{theorem}
Let $G=(V,E)$ be a graph, possibly directed, on $N$ vertices. Let $\Delta$ be the maximum degree of any vertex on $G$ or, if $G$ is directed, let $\Delta$ be the maximum degree of any vertex on the undirected version of $G$. Finally, let $n$ be the number of different labels attached to graph vertices. 
Then all the label sequences of simple cycles of length up to $\ell$ on $G$ can be found in time
\begin{equation}
O\left(N+M+(n^\ell \ell^\omega+\ell\Delta)|S_{\ell}|\right)
\end{equation}
and $O\big(N(\Delta+1)\big)$ space.
The same complexities are achieved to find the label sequences of all simple paths up to length $\ell$ or of simple cycles/paths with fixed endpoints up to length $\ell$.\\ 
\end{theorem}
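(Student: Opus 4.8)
The plan is to run \emph{verbatim} the algorithm of Theorem~\ref{ThmMain} --- reverse search over the weakly connected induced subgraphs $H$ of $G$ with $|H|\le\ell$, and for each such $H$ an evaluation of the corresponding term of Eq.~(\ref{EqforGammaL}), (\ref{EqforPiL}) or (\ref{EqforGammaiL}) --- but with the scalar powers $\mathsf{A}_H^k$ (resp.\ $\mathsf{A}|_H^k$) replaced by label-resolved powers. Assume $n\ge 2$ (the case $n=1$ is Theorem~\ref{ThmMain}). Fix the $n$-letter alphabet $\Sigma$, let $R_\ell$ be the truncated word algebra over $\Sigma$ (the free $\mathbb{Z}$-module on words of length $\le\ell$ with concatenation discarded beyond length $\ell$), and attach to each vertex $v$ its one-letter word $\lambda(v)\in\Sigma$. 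Define $B_H^{(0)}=\id$ and $\big(B_H^{(k+1)}\big)_{ij}=\big(\sum_v (B_H^{(k)})_{iv}(\mathsf{A}_H)_{vj}\big)\cdot\lambda(j)$, i.e.\ one ordinary (scalar-matrix) multiplication by $\mathsf{A}_H$ followed by right-appending the letter of the destination vertex in every entry. A one-line induction on $k$ shows that the coefficient of a word $s$ in $(B_H^{(k)})_{ij}$ is exactly the number of walks of length $k$ from $i$ to $j$ in $H$ whose vertex-label sequence is $s$. Substituting $\tr(B_H^{(k)})$, $(B_H^{(k)})_{ij}$, $(B_H^{(k)})_{ii}$ for $\tr(\mathsf{A}_H^k)$, $(\mathsf{A}|_H^k)_{ij}$, $(\mathsf{A}|_H^k)_{ii}$ in Eqs.~(\ref{EqforGammaL})--(\ref{EqforGammaiL}) and summing the resulting $R_\ell$-valued, binomially weighted contributions over the relevant subgraphs yields, coefficient by coefficient, the list of all label sequences realized by simple cycles / simple paths / fixed-endpoint simple cycles of each length $\le\ell$ (with the rotation convention on cyclic sequences inherited from the trace). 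The sieve identity is untouched; only the coefficient ring has changed.

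The cost analysis is the point where $n^\ell$, rather than $\ell\,n^\ell$, appears. For a subgraph with $|H|=m\le\ell$, the entries of $B_H^{(k)}$ are supported on words of length $k$, hence lie in a free module of rank at most $n^k$. Building $B_H^{(k+1)}$ from $B_H^{(k)}$ costs $O(n^k m^\omega)$: the multiplication by the scalar matrix $\mathsf{A}_H$ is $m^\omega$ coefficient operations, each an addition of two rank-$\le n^k$ vectors, and the subsequent append relabels coordinates in $O(n^k)$ per entry, i.e.\ $O(m^2 n^k)=O(\ell^\omega n^k)$ overall since $\omega\ge 2$. We need $B_H^{(k)}$ for all $k\le\ell$ (the binomial coefficient in Eq.~(\ref{EqforGammaL}) already forces $m\le k\le m+|N(H)|$), so the per-subgraph power cost is $\sum_{k=1}^{\ell}O(n^k\ell^\omega)$, a geometric series dominated by its last term and hence $O(n^\ell\ell^\omega)$ --- the key being that a length-$\ell$ word is never carried through more than one multiplication. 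Counting $|N(H)|$ still costs $O(\ell\Delta)$ per subgraph as in Lemma~\ref{BasicLemma}, and merging $H$'s $R_\ell$-valued contribution into the global length- and label-resolved accumulator costs $O(\ell\,n^\ell)$, which is absorbed. Multiplying by $|S_\ell|$ and adding the reverse-search cost $O(N+M+\ell^2|S_\ell|)$ from Eq.~(\ref{revsearchcost}) (note $\ell^2\le\ell^\omega\le n^\ell\ell^\omega$) gives the stated $O\big(N+M+(n^\ell\ell^\omega+\ell\Delta)|S_\ell|\big)$. For simple paths and the fixed-endpoint variants one uses $\mathsf{A}|_H$ in place of $\mathsf{A}_H$, whose multiplication has the same cost as noted after Eq.~(\ref{GenPi}), and restricts the reverse search to subgraphs containing the prescribed endpoints, which only lowers the count; the complexity is unchanged.

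For space, reverse search again uses $O(N+M)=O(N(\Delta+1))$ (as $M\le N\Delta$), and the graph together with its adjacency lists dominates: the label sequences are streamed to the output as produced, and the transient per-subgraph storage for the matrices $B_H^{(k)}$ --- at most $O(\ell^2 n^\ell)$ --- is freed before the next subgraph, so it does not enter the $N$-dependent term. I expect the only genuinely delicate points to be (i) pinning down the endpoint/rotation convention so that the $R_\ell$-lift of Eqs.~(\ref{EqforGammaL})--(\ref{EqforGammaiL}) recovers simple cycles/paths and not closed/open walks --- but this cancellation is carried entirely by the binomial-coefficient sieve and is therefore ring-agnostic, so it reduces to checking that $B_H^{(k)}$ really is the label-resolved $k$-th power (the induction above); and (ii) the geometric-series collapse of the per-subgraph cost, which is the one quantitative step that must be stated carefully.
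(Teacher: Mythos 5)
Your proposal is correct and follows essentially the same route as the paper: replace the adjacency matrix by a label-carrying matrix over a truncated word algebra, observe that the sieve identities of Eqs.~(\ref{EqforGammaL})--(\ref{EqforGammaiL}) are ring-agnostic, and bound the per-subgraph cost of the labeled matrix powers by $O(n^\ell\ell^\omega)$. Your geometric-series argument for why iterated multiplication costs $O(n^\ell\ell^\omega)$ rather than $O(\ell\,n^\ell\ell^\omega)$ is in fact a welcome sharpening of the paper's terser statement that a single multiplication with entries supported on at most $n^{\ell-1}$ words costs $|H|^\omega n^{\ell-1}$.
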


\begin{proof}
In the presence of labels, Eqs.~(\ref{EqforGammaL}), (\ref{EqforPiL}) and (\ref{EqforGammaiL}) continue to be valid and provide the label sequences of the simple cycles/paths upon replacing the adjacency matrix $\mathsf{A}$ by the labeled adjacency matrix $\mathsf{W}$, defined by
$$
\mathsf{W}_{ij}:= \mathsf{A}_{ij}\, w_{L(i)L(j)},
$$
where $L(i)$ is the label of vertex $i$ and $w$ is a formal variable. For example, a nitrogen-oxygen bond in a molecular graph would appear as $w_{NO}$ in $\mathsf{W}$. The time complexity of evaluating Eq.~(\ref{EqforGammaL}), (\ref{EqforPiL}) or (\ref{EqforGammaiL}) remains unchanged except for the cost of calculating the traces $\operatorname{Tr}(\mathsf{W}^\ell_{H})$. These can be obtained through matrix multiplications. Since the entries of $\mathsf{W}^{\ell-1}_{H}$ are sums of label sequences of walks of length $\ell-1$ on the subgraph $H$ and since there are at most $n^{\ell-1}$ such sequences, evaluating the trace costs at most $|H|^{\omega}n^{\ell-1} = O(\ell^\omega n^{\ell})$ time. Replacing $\ell^\omega$ with this cost in Theorem~\ref{ThmMain} yields the result. 
\end{proof}
~\\

\section{Conclusion}
We have presented a novel general purpose algorithm for counting simple cycles and simple paths of any length $\ell$ on any graph, including directed and weighted ones. The time complexity of this algorithm scales with the number $|S_\ell|$ of weakly connected induced subgraphs on at most $\ell$ vertices, making it the best general purpose algorithm whenever $ (\ell^{\omega-1}\Delta^{-1}+1) |S_\ell|\leq |\text{Cycle}_\ell|$. In this expression $|\text{Cycle}_\ell|$ is the total number of simple cycles of length up to $\ell$, including self-loops and backtracks.
Empirically, we found that this happens on Erd\H{o}s-R\'{e}nyi random graphs when the edge-probability exceeds circa $4/N$, as well as on some real-world networks, such as those in the \textsc{Actors} family of graphs.\\[-.7em] 

If the network under study is undirected and if counting simple cycles of length up to 7 is sufficient, then the algorithm of Alon, Yuster and Zwick is still by far the fastest. Furthermore, while we can predict that there must a graph size such that AYZ becomes slower than the algorithm presented here, on Erd\H{o}s-R\'enyi random graphs this size seems to be much beyond what we can reach. Indeed, we could not run AYZ on graphs with more than $N\gtrsim 12,000$ vertices owing to its important memory consumption. While this number can likely be increased with an optimised implementation of AYZ in conjunction with more memory, it is unlikely to get substantially larger as the memory usage of AYZ scales with $N^2$. In contrast, we could run the algorithm presented here on networks with over 300,000 vertices without running out of memory.\\[-.7em] 

Finally, even though the algorithm presented here is the best general purpose algorithm on the class of graphs with $(\ell^{\omega-1}\Delta^{-1}+1) |S_\ell|\leq |\text{Cycle}_\ell|$, the time necessary to count simple cycles of length e.g. up to 10 can be prohibitively large on large networks. Instead, the algorithm is best used in conjunction with Monte Carlo methods. We demonstrate this procedure in a separate publication \cite{Giscard2016d}, where we use it in a sociological context to obtain the ratios of negative to positive simple cycles of length up to 20 on several real-world directed signed networks with up to 130,000+ vertices.

\bibliographystyle{plain}

\end{document}